\numberwithin{equation}{section}
\newtheorem{lemma}{Lemma}[section]
\newtheorem{theorem}[lemma]{Theorem}
\newtheorem{rem}[lemma]{Remark}
\newtheorem{remark}[lemma]{Remark}
\newtheorem{definition}[lemma]{Definition}
\newcommand{\re}{\begin{rem}\rm}
  \newcommand{\mar}{\end{rem}}
\newcommand{\ee }{\mathrm{I}\!\!1}
\newcommand{\mel}{\end{eqnarray*}}
\newcommand{\pl}{\hspace{.1cm}}
\newcommand{\qd}{\end{proof}\vspace{0.5ex}}
\newcommand{\pf}{\begin{proof}}
\newcommand{\be}{\left|{\atop}}
\newcommand{\xspace}{\hbox{\kern-2.5pt}}
\newcommand{\xyspace}{\hbox{\kern-1.1pt}}
\newcommand\bra[1]{\langle  #1|}
\newcommand\ket[1]{| #1\rangle}
\definecolor{LightGray}{rgb}{0.94,0.94,0.94}
\definecolor{VeryLightBlue}{rgb}{0.9,0.9,1}
\definecolor{LightBlue}{rgb}{0.8,0.8,1}
\definecolor{DarkBlue}{rgb}{0,0,0.6}
\definecolor{LightGreen}{rgb}{0.88,1,0.88}
\definecolor{MidGreen}{rgb}{0.6,1,0.6}
\definecolor{DarkGreen}{rgb}{0,0.6,0}
\definecolor{DarkGrreen}{rgb}{0,0.8,0}
\definecolor{VeryLightYellow}{rgb}{1,1,0.9}
\definecolor{LightYellow}{rgb}{1,1,0.6}
\definecolor{MidYellow}{rgb}{1,1,0.5}
\definecolor{DarkYellow}{rgb}{0.8,1,0.3}
\definecolor{VeryLightRed}{rgb}{1,0.9,0.9}
\definecolor{LightRed}{rgb}{1,0.8,0.8}
\definecolor{DarkRed}{rgb}{0.8,0.2,0}
\definecolor{DarkRedb}{rgb}{0.6,0.2,0}
\definecolor{DarkLila}{rgb}{0.8,0,1}
\definecolor{Beige}{rgb}{0.96,0.96,0.86}
\definecolor{Gold}{rgb}{1.,0.84,0.}
\definecolor{Goldb}{rgb}{0.7,0.3,0.5}
\definecolor{MyYellow}{rgb}{1.,0.84,0.8}
\def\11{\mathbb{I}}
\DeclareRobustCommand\openone{\leavevmode\hbox{\small1\normalsize\kern-.33em1}}
\renewcommand{\be}{\begin{equation}}
	\renewcommand{\ee}{\end{equation}}
\newcommand{\bea}{\begin{eqnarray}}
	\newcommand{\eea}{\end{eqnarray}}
\newcommand{\beas}{\begin{eqnarray*}}
	\newcommand{\eeas}{\end{eqnarray*}}
\newtheorem*{theorem*}{Theorem}
\newtheorem*{remark*}{Remark}
\newtheorem*{lemma*}{Lemma}
\newtheorem*{note*}{Note}
\newtheorem*{prop*}{Proposition}
\newtheorem*{fact*}{Fact}
\newcommand{\tr}{\mbox{tr}}
\providecommand{\keywords}[1]
{
  \textbf{\textit{Keywords: }} #1
}
\begin{document}

\title{Monotonicity of optimized quantum $f$-divergence}
\author[1]{Haojian Li\thanks{Haojian Li: lihaojianmath@gmail.com}}
\affil[1]{Zentrum Mathematik\\ Technische Universit\"at M\"unchen, Garching, 85748, Germany}
% \email[Haojian Li]{lihaojianmath@gmail.com}
\maketitle
\begin{abstract} Optimized quantum $f$-divergence was first introduced by Wilde in \cite{Wil18}. Wilde raised the question of whether the monotonicity of optimized quantum $f$-divergence can be generalized to maps that are not quantum channels. We answer this question by generalizing the monotonicity of optimized quantum $f$-divergences to  positive trace preserving maps satisfying a Schwarz inequality.\\
\end{abstract}
\keywords{optimized quantum f-divergence, R\'enyi divergence, hockey-stick divergence, monotonicity}

\section{introduction}
%\textcolor{red}{say something about relative entropy}
Umegaki divergence is a fundamental concept in quantum information theory and quantum computation. It measures the distinguishability of two quantum states first introduced in \cite{um62}, also see the survey \cite{vedral2002role} and references therein. Various generalizations of Umegaki divergences have been introduced and extensively studied in the past few decades.  Petz defined the quasi quantum divergence via relative modular operator, see \cite{Petz86}. Wilde (\cite{Wil18}) 
introduced the optimized quantum $f$-divergence that further generalized the definition by Petz.
Another notable generalization is sandwiched $\alpha$-R\'enyi divergence, introduced independently by Wilde et al  (\cite{wilde2014strong}) and M\"uller-Lennert et al (\cite{muller2013quantum}). Very recently Hirche and Tomamichel (\cite{hirche2023quantum}) introduced a new family of quantum $f$-divergence by using the quantum version of hockey-stick divergence.

A crucial property of Umegaki divergence is monotonicity under the actions of quantum operations (also referred to as \textit{data processing inequality} in the literature), stating that the distinguishability of two quantum states does not increase after undergoing a quantum channel. Monotonicity plays an important role in the study of quantum channel capacity, quantum machine learning, quantum hypothesis testing, and etc. The aforementioned generalizations of Umegaki divergence all satisfy the monotonicity under the actions of quantum channels. It is noteworthy that the monotonicity of Umegaki divergence and sandwiched $\alpha$-R\'enyi divergence has been generalized to the positive trace preserving maps in \cite{muller2017monotonicity} based on \cite{beigi2013sandwiched}.

Wilde (\cite{Wil18})introduced the optimized quantum $f$-divergence as a unified definition of various divergences, including Umegaki divergence and sandwiched $\alpha$-R\'enyi divergence. The data processing inequality of  optimized quantum $f$-divergence was proved in \cite{Wil18} by demonstrating invariance under isometries and monotonicity under taking a partial trace. The recoverability and extension of optimized quantum $f$-divergence to the general von Neumann algebraic setting was accomplished in  \cite{GW20}.  In this paper, we prove the monotonicity of optimized quantum $f$-divergence for positive trace preserving maps satisfying a Schwarz inequality. It remains open whether the optimized quantum $f$-divergence is monotone under the actions of any positive trace preserving maps. Hirche and Tomamichel (\cite{hirche2023quantum}) defined the quantum $f$-divergence by considering the integral of the hockey-stick divergence. This definition is closely related to the sandwiched $\alpha$-R\'enyi divergence and Petz $\alpha$-R\'enyi divergence. 

%\section{Preliminary}definition and tpp monotonicity.\begin{definition}operator monotonicity and operator concavity/convexity\end{definition}\begin{definition}\end{definition} \begin{theorem}[Loewner-Heinz] \begin{enumerate} \item For $p\in[-1,0]$, the function $f(x)=x^{p}$ is operator monotone decreasing and operator concave.\item For $p\in[0,1]$, the function $f(x)=x^{p}$ is operator monotone and operator concave.\item For $p\in[1,2]$, the function $f(x)=x^{p}$ is operator convex.\end{enumerate}\end{theorem}

\noindent {\bf Notations.} We use $B(H)$ for the linear space of bounded linear operators defined in the complex Hilbert space $H$. We use $B_{sa}(H)\subset B(H)$ for the space of self-adjoint operators. We use $B_{+}(H)\subset B(H)$ for the space of positive definite operators. We use $\tr$ as the trace on $B(H)$ and the $\langle A,B\rangle=\tr(A^{*}B)$ as the Hilbert-Schmidt inner product. 

\section{Monotonicity of quantum optimized $f$-divergence}
Let us recall the definition of the optimized quantum $f$-divergence, see \cite{Wil18} for more explanation and properties.
\begin{definition} Let $f$ be a function with domain $(0,\infty)$ and range $\mathbb{R}$. For positive semi-definite operators $\rho,\sigma\in B(H_{S})$, we define the optimized quantum $f$-divergence as \begin{align*}
\tilde{Q}_{f}(\rho\|\sigma)=\sup_{\tau>0,\tr(\tau)\leq 1,\epsilon >0}\tilde{Q}_{f}(\rho\|\sigma+\epsilon \Pi_{\sigma}^{\perp};\tau),
\end{align*}
where $\tilde{Q}_{f}(\rho\|\omega;\tau)$ is defined for positive definite $\omega,\tau\in B_{+}(H_{S})$ as 
\begin{align}
\tilde{Q}_{f}(\rho\|\omega;\tau)=\bra{\phi^{\rho}}_{S\hat{S}}f(\tau_{S}^{-1}\otimes \omega^{T}_{\hat{S}})\ket{\phi^{\rho}}_{S\hat{S}}\pl,\label{df1}
\end{align}
\begin{align*}
\ket{\phi^{\rho}}_{S\hat{S}}=(\rho_{S}^{\frac{1}{2}}\otimes I_{\hat{S}})\ket{\Gamma}_{S\hat{S}}\pl.
\end{align*}
In the above, $\Pi_{\sigma}^{\perp}$ denotes the projection onto the kernel of $\sigma$, $H_{S}$ is an auxiliary Hilbert space isomorphic to $H_{S}$,
$$\ket{\Gamma}_{S\hat{S}}=\sum_{i=1}^{|S|}\ket{i}_{S}\ket{i}_{\hat{S}},$$
for orthonormal bases $\{\ket{i}_{S}\}$ and $\{\ket{i}_{\hat{S}}\}$, and the $T$ superscript indicates transpose with respect to the basis $\{\ket{i}_{\hat{S}}\}$.
\end{definition}
Wilde (\cite{Wil18}) found an equivalent formulation of \eqref{df1} for invertible $\sigma$:
$$\tilde{Q}_{f}(\rho\|\sigma;\tau)=\langle \rho^{\frac{1}{2}}, f(\Delta(\sigma,\tau))(\rho^{\frac{1}{2}})\rangle,$$
where $\Delta(\sigma,\tau)(X):=\sigma X \tau^{-1}$ is the relative modular operator. 
The optimized quantum $f$-divergence  can be simplified as 
\begin{align}
\tilde{Q}_{f}(\rho\|\sigma)=\sup_{\tau>0,\tr(\tau)\leq 1}\tilde{Q}_{f}(\rho\|\sigma,\tau),\label{df2}
\end{align}
where $\sigma$ is invertible.
%The equivalent definition of \eqref{df2} follows from the monotonicity of \eqref{df1} and the continuity of $f$.

Recall that a function $f: J\subset\mathbb{R}\to \mathbb{R}$ is  said to be \textit{operator monotone decreasing} if $A\geq B$ for any $A,B\in B_{sa}(H)$ with spectra in $J$ implies $f(A)\leq f(B)$. Now we are ready to state our main theorem in this section. 
\begin{theorem}\label{main1}
Let $\Phi: B(H_{A})\to B(H_{B})$ be a positive trace preserving linear map satisfying the Schwarz inequality 
\begin{align}\Phi^{*}(X)\Phi^{*}(\tau)^{-1}\Phi^{*}(X^{*})\leq \Phi^{*}(X\tau^{-1}X^{*})\label{eq:schwarz}
\end{align}
for any $X\in B(H_{B})$ and  $\tau\in B_{+}(H_{B})$.
Let $f:(0,\infty)\to\mathbb{R}$ be operator monotone decreasing. For $\rho,\sigma\in B_{+}(H_{A})$ and $\Phi(\rho),\Phi(\sigma)\in B_{+}(H_{B})$, we have  
$$\tilde{Q}_{f}(\rho\|\sigma)\geq \tilde{Q}_{f}(\Phi(\rho)\|\Phi(\sigma)).$$ 
\end{theorem}
\begin{proof}Let us define $V_{\rho}:B(H_{B})\to B(H_{A})$ by
\begin{align*}
V_{\rho}(X):=\Phi^{*}(X\Phi(\rho)^{-\frac{1}{2}})\rho^{\frac{1}{2}}.
\end{align*}
A useful observation is $V_{\rho}(\Phi(\rho)^{\frac{1}{2}})=\rho^{\frac{1}{2}}$.
For any $0<\omega$ with $\tr(\omega)\leq 1$, 
let \begin{align}\label{tauomega1}\tau:=\rho^{\frac{1}{2}}\Phi^{*}(\Phi(\rho)^{-\frac{1}{2}}\omega \Phi(\rho)^{-\frac{1}{2}})\rho^{\frac{1}{2}}.\end{align}
Then $\tau$ is invertible since $\Phi^{*}$ is unital. Now we compute the trace of $\tau$:
\begin{align*}
\tr(\tau)=&\tr(\rho\Phi^{*}(\Phi(\rho)^{-\frac{1}{2}}\omega\Phi(\rho)^{-\frac{1}{2}}))\\
=&\tr(\Phi(\rho)\Phi(\rho)^{-\frac{1}{2}}\omega\Phi(\rho)^{-\frac{1}{2}})=\tr(\omega)\leq 1.
\end{align*}
We claim:
\begin{align}\label{eq:isolemma0}
V_{\rho}^{*}\Delta(\sigma,\tau) V_{\rho}\leq \Delta(\Phi(\sigma),\omega)
\end{align}
and 
\begin{align}\label{eq:isolemma}
f(V_{\rho}^{*}\Delta(\sigma,\tau) V_{\rho})\geq f(\Delta(\Phi(\sigma),\omega)).
\end{align}
Indeed, for any $X\in B(H_{B})$,
\begin{align}
&\langle X, V_{\rho}^{*}\Delta(\sigma, \tau) V_{\rho} (X)\rangle\\
=&\langle V_{\rho}(X),\Delta(\sigma,\tau)V_{\rho}(X) \rangle\\
=&\tr(\rho^{\frac{1}{2}}\Phi^{*}(\Phi(\rho)^{-\frac{1}{2}}X^{*})\sigma\Phi^{*}(X\Phi(\rho)^{-\frac{1}{2}})\rho^{\frac{1}{2}}\rho^{-\frac{1}{2}}\Phi^{*}(\Phi(\rho)^{-\frac{1}{2}}\omega \Phi(\rho)^{-\frac{1}{2}})^{-1}\rho^{-\frac{1}{2}} )\\
=&\tr(\sigma\Phi^{*}(X\Phi(\rho)^{-\frac{1}{2}})\Phi^{*}(\Phi(\rho)^{-\frac{1}{2}}\omega \Phi(\rho)^{-\frac{1}{2}})^{-1}\Phi^{*}(\Phi(\rho)^{-\frac{1}{2}}X^{*}))\\
\leq & \tr(\sigma \Phi^{*}(X\omega^{-1} X^{*}))\label{eq:iso1}\\
=&\langle \Phi(\sigma)X\omega^{-1}X^{*}\rangle\\
=&\langle X, \Delta(\Phi(\sigma),\omega)(X) \rangle\pl,
\end{align}
where the inequality \eqref{eq:iso1} follows from the Schwarz inequality \eqref{eq:schwarz}. The inequality \eqref{eq:isolemma} is an immediate application of \eqref{eq:isolemma0} since $f$ is operator monotone decreasing. 
%Wilde (\cite{Wil18}) proved that for positive definite $\sigma$ the definition \eqref{df1} can be rewritten as  $$\tilde{Q}_{f}(\rho\|\sigma;\tau)=\langle \rho^{\frac{1}{2}}, f(\Delta(\sigma,\tau))(\rho^{\frac{1}{2}})\rangle.$$ 
Now we show the monotonicity of \eqref{df1}:
%for any invertible $\sigma$ and $\Phi(\sigma)$, we have
\begin{align}\tilde{Q}_{f}(\rho\|\sigma;\tau)
=&\langle V_{\rho}(\Phi(\rho)^{\frac{1}{2}}), f(\Delta(\sigma,\tau)) ( V_{\rho}(\Phi(\rho)^{\frac{1}{2}}) )\rangle \label{eq:main1}\\
=&\langle \Phi(\rho)^{\frac{1}{2}}, (V_{\rho}^{*}\circ f(\Delta(\sigma,\tau)) \circ V_{\rho}) (\Phi(\rho)^{\frac{1}{2}}) \rangle \\
\geq & \langle \Phi(\rho)^{\frac{1}{2}}, f(V_{\rho}^{*}\Delta(\sigma,\tau)V_{\rho}) (\Phi(\rho)^{\frac{1}{2}}) \rangle\label{eq:main2}\\
\geq& \langle \Phi(\rho)^{\frac{1}{2}}, f(\Delta(\Phi(\sigma),\omega)) (\Phi(\rho)^{\frac{1}{2}}) \rangle\label{eq:main3}\\
=&\tilde{Q}_{f}(\Phi(\rho)\|\Phi(\sigma);\omega) \pl.
\end{align}
The equality \eqref{eq:main1} follows from the observation that $V_{\rho}(\Phi(\rho)^{\frac{1}{2}})=\rho^{\frac{1}{2}}$. The inequality \eqref{eq:main3} follows  from \eqref{eq:isolemma}. The inequality \eqref{eq:main2} is a direct application of the operator Jensen inequality (\cite{hansen2003jensen}) for operator convex functions. (Here we use the facts that $f:(0,\infty)\to \mathbb{R}$ is operator concave  if $f:(0,\infty)\to \mathbb{R}$ is operator monotone (\cite[Theorem III.2]{ando1978topics}, also see the proof  of \cite[Theorem V.2.5]{Bha13}) and that $f$ is operator convex if and only if $-f$ is operator concave.)
Then we have the monotonicity:
\begin{align}
\tilde{Q}_{f}(\rho\|\sigma)\geq &\sup_{\tau\text{ defined by \eqref{tauomega1}}}\tilde{Q}_{f}(\rho\|\sigma;\tau)\label{eq:main4}\\
\geq & \sup_{\omega>0,\tau(\omega)\leq 1} \tilde{Q}_{f}(\Phi(\rho)\|\Phi(\sigma),\omega)\label{eq:main5}\\
=&\tilde{Q}_{f}(\Phi(\rho)\|\Phi(\sigma))\pl,\label{eq:main6}
\end{align}
where we use the equivalent definition \eqref{df2} in \eqref{eq:main4} and \eqref{eq:main6}.  The inequality \eqref{eq:main5} follows from the monotonicity of \eqref{df1}.
\end{proof}

% In the proof, we also use the facts that $f:(0,\infty)\to \mathbb{R}$ is operator concave  if $f:(0,\infty)\to \mathbb{R}$ is operator monotone (\cite[Theorem III.2]{ando1978topics}) and that $f$ is operator convex if and only if $-f$ is operator concave.
 %The condition imposed on $\Phi$ is stronger than the Schwarz inequality, which degrades to the Schwarz inequality for $\sigma=id$. 
The map $\Phi$ is not necessarily a quantum channel. For example, let $\Phi^{*}(\rho)=\rho^{T}$ be the transpose map, then $\Phi$ satisfy the properties in Theorem \ref{main1}. It is well known that $\Phi$ is not completely positive.  It is worth mentioning that any 2-positive map $\Phi^{*}$ satisfies the Schwarz type inequality, which was mentioned in the email from Mark M. Wilde. We need the following fact (\cite{Car10}):
 \begin{fact*} \rm{Let A and C be positive semi-definite matrices and A be invertible. Then $\left(\begin{matrix} A & B\\
 B^{*}& C\end{matrix}\right)\geq 0$ if and only if $C\geq B^{*}A^{-1}B$.}
 \end{fact*}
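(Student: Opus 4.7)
The plan is to establish the equivalence via a block LDU decomposition (the Schur complement factorization) and appeal to the fact that congruence by an invertible matrix preserves the positive semi-definite cone. Since $A$ is positive semi-definite and invertible it is in fact positive definite, so $A^{-1}$ exists and is positive definite; no other preliminaries are needed.

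The core computation is the identity
$$\begin{pmatrix} A & B \\ B^{*} & C \end{pmatrix} = \begin{pmatrix} I & 0 \\ B^{*}A^{-1} & I \end{pmatrix} \begin{pmatrix} A & 0 \\ 0 & C - B^{*}A^{-1}B \end{pmatrix} \begin{pmatrix} I & A^{-1}B \\ 0 & I \end{pmatrix},$$
verified by direct block multiplication. The outer factors are mutually adjoint unipotent triangular matrices, hence invertible, so this writes the block matrix as $L^{*}DL$ with $L$ invertible and $D$ block-diagonal. Then $\langle v, L^{*}DL v\rangle = \langle Lv, DLv\rangle$ for every vector $v$, and since $v \mapsto Lv$ is a bijection of the ambient Hilbert space, the block matrix is positive semi-definite if and only if $D$ is. As $D$ has positive definite upper block $A$, positivity of $D$ reduces to the single condition $C - B^{*}A^{-1}B \geq 0$, which is exactly the claim.

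An alternative, equivalent route is completing the square directly: for any vectors $x, y$ of the appropriate sizes,
$$\left\langle \begin{pmatrix} x \\ y \end{pmatrix}, \begin{pmatrix} A & B \\ B^{*} & C \end{pmatrix}\begin{pmatrix} x \\ y \end{pmatrix} \right\rangle = \langle x + A^{-1}By,\, A(x + A^{-1}By)\rangle + \langle y, (C - B^{*}A^{-1}B)y\rangle.$$
If the block matrix is positive semi-definite, substituting $x = -A^{-1}By$ forces $\langle y, (C - B^{*}A^{-1}B)y\rangle \geq 0$ for every $y$; conversely, if $C - B^{*}A^{-1}B \geq 0$, both terms on the right-hand side are non-negative for every $(x,y)$. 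The only subtlety worth flagging is the invertibility assumption on $A$, which is used crucially to define $A^{-1}B$ in the square completion; without it one would need to replace $A^{-1}$ by a Moore--Penrose pseudoinverse and impose a range inclusion $\mathrm{ran}(B) \subseteq \mathrm{ran}(A)$. Beyond this, no step poses a genuine obstacle -- the result is elementary linear algebra.
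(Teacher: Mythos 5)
Your proof is correct. The paper itself does not prove this Fact --- it is quoted with a citation to Carlen's lecture notes \cite{EC10}, where the argument is essentially the one you give: the congruence $L^{*}DL$ with the Schur complement $C-B^{*}A^{-1}B$ in the block-diagonal middle factor (equivalently, completing the square in the quadratic form). Both of your routes are sound, the block multiplication checks out, and your remark about needing a pseudoinverse and the range condition $\mathrm{ran}(B)\subseteq\mathrm{ran}(A)$ when $A$ is singular is accurate, though not needed here since the paper only invokes the Fact with $A$ invertible.
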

\noindent By the fact $\left(\begin{matrix} A & B\\ B^{*} & B^{*}A^{-1}B \end{matrix} \right)\geq 0$. For a $2$-positive $\Phi^{*}$,  we have 
$\left(\begin{matrix} \Phi^{*}(A) & \Phi^{*}(B)\\ \Phi^{*}(B^{*}) & \Phi^{*}(B^{*}A^{-1}B) \end{matrix} \right)\geq 0$.
%where we use $\Phi^{*}(B)^{*}=\Phi^{*}(B^{*})$. 
Using the fact again yields $\Phi^{*}(B^{*}A^{-1}B)\geq \Phi^{*}(B^{*})\Phi^{*}(A)^{-1}\Phi^{*}(B)$.
\begin{remark} \rm{In Theorem \ref{main1}, four quantum states $\rho,\sigma,\Phi(\rho),\Phi(\sigma)$ are all required to be invertible since we are using the equivalent definition $\eqref{df2}$. The invertibility is not necessary in the proof of \cite[Proposition 6]{Wil18}, where the proof of monotonicity relies on the Kraus decomposition of a quantum channel.  It will be interesting to generalize our results to non-invertible quantum states and operator convex functions (for example, see \cite{hiai2011quantum}). It also remains open whether we can prove the monotonicity of the optimized quantum $f$-divergence without the Schwarz inequality \eqref{eq:schwarz}.}
\end{remark}

\section{Monotonicity of Petz $\alpha$-R\'enyi divergence}
 Recall that any self-adjoint operator $x$ can be decomposed as the difference of two positive operators $x=x_{+}-x_{-}$. For any positive operators $\rho,\sigma$ and any $\gamma\in\mathbb{R}_{+}$, the (quantum) $\gamma$-hockey-stick divergence is defined by 
\begin{align}
E_{\gamma}(\rho\|\sigma)=\tr\left((\rho-\gamma\sigma)_{+} \right).
\end{align}
The quantum version of $\gamma$-hockey-stick divergence was introduced by \cite{sharma2012strong} and further explored \cite{hirche2023quantuma,hirche2023quantum}.
The monotonicity of $\gamma$-hockey-stick divergence was proved in 
\cite[Lemma 4]{sharma2012strong}, where they actually proved a stronger result as below.
\begin{lemma}[Monotonicity of hockey-stick divergence] \label{lemma:mhs} Let $\Phi:B(H_{A})\to B(H_{B})$ be a positive trace preserving map. Then for positive operators $\rho,\sigma\in B(H_{A})$ and any $\gamma\in\mathbb{R}_{+}$, we have 
\begin{align*}
E_{\gamma}(\Phi(\rho)\|\Phi(\sigma))\leq E_{\gamma}(\rho\|\sigma)\pl.
\end{align*}
\end{lemma}
\begin{proof} The proof is from \cite[Lemma 4]{sharma2012strong}, and we include it for completeness. Let $P$ be the projection onto the positive part of $\Phi(\rho-\gamma\sigma)$, and we have
\begin{align}
E_{\gamma}(\rho\|\sigma)&=\tr\left((\rho-\gamma\sigma)_{+}\right)\\
&=\tr(\Phi((\rho-\gamma\sigma)_{+}))\label{eq:mhs1}\\
&\geq \tr(P(\Phi((\rho-\gamma\sigma)_{+})))\label{eq:mhs2}\\
&\geq \tr(P(\Phi((\rho-\gamma\sigma)_{+}))-P(\Phi((\rho-\gamma\sigma)_{-})))\label{eq:mhs3}\\
&= \tr(P(\Phi((\rho-\gamma\sigma))))\\
&=\tr((\Phi(\rho)-\gamma\Phi(\sigma))_{+})\label{eq:mhs4}\\
&=E_{\gamma}(\Phi(\rho)\|\Phi(\sigma))\pl.
%E_{\gamma}(\Phi(\rho)\|\Phi(\sigma))&=\tr\left((\Phi(\rho)-\gamma\Phi(\sigma))_{+}\right)=\tr((\Phi(\rho-\gamma\sigma))_{+} )=\tr(P(\Phi(\rho-\gamma\sigma)))\\
\end{align}
The equality \eqref{eq:mhs1} follows from that $\Phi$ is trace preserving. Inequalities \eqref{eq:mhs2} and \eqref{eq:mhs3} follows from the positivity of $\Phi$.The equality \eqref{eq:mhs4} follows from the definition of the projection $P$.  We shall emphasize that complete positivity of $\Phi$ is not necessary in the proof. 
\end{proof}
In \cite{hirche2023quantum}, Hirche and Tomamichel defined the following quantum $f$-divergence.
\begin{definition}[Definition 2.3 in \cite{hirche2023quantum}] We denote by $\mathcal{F}$ the set of functions $f:(0,\infty)\to \mathbb{R}$ that are convex and twice differentiable with $f(1)=0$. Then for any quantum states $\rho,\sigma$, we define the quantum $f$-divergence by 
\begin{align*} 
D_{f}(\rho\|\sigma):=\int_{1}^{\infty} f''(\gamma) E_{\gamma}(\rho\|\sigma)+\gamma^{-3}f''(\gamma^{-1})E_{\gamma}(\sigma\|\rho)d\gamma\pl
\end{align*}
whenever the integral is finite and $D_{f}(\rho\|\sigma)=\infty$ otherwise.
\end{definition}
The quantum $f$-divergence  reduces to Umegaki divergence \begin{align*}D_{f}(\rho\|\sigma):=\begin{cases}\tr(\rho\ln\rho-\rho\ln\sigma),&\text{if 
 } supp(\rho)\subset supp(\sigma)\pl,\\
\infty,&\text{otherwise}\pl,\end{cases}\end{align*} with $f(x)=x\ln(x)$. This integral representation of Umegaki divergence first appeared in \cite{frenkel2023integral}. Another important example is quantum Hellinger divergence 
\begin{align}
H_{\alpha}(\rho\|\sigma):=\alpha\int_{1}^{\infty} \gamma^{\alpha-2}E_{\gamma}(\rho\|\sigma)+\gamma^{-\alpha-1} E_{\gamma}(\sigma\|\rho)d\gamma
\end{align}
with $f(x)=\frac{x^{\alpha}-1}{\alpha-1}$ with $\alpha\in(0,1)\cup (1, \infty)$.
As an immediate application of Lemma \ref{lemma:mhs}, we have the following monotonicity of quantum $f$-divergence.
\begin{theorem}[Monotonicity of quantum $f$-divergence] Let $\Phi:B(H_{A})\to B(H_{B})$ be a positive trace preserving map. For $f\in\mathcal{F}$ and any quantum states $\rho,\sigma\in B(H_{A})$, we have 
\begin{align*}
D_{f}(\Phi(\rho)\|\Phi(\sigma))\leq D_{f}(\rho\|\sigma)\pl.
\end{align*}
\end{theorem}

%\section*{Declaration of data availability and conflict of interest}  Data sharing not applicable to this article as no datasets were generated or analysed during the current study. The author states that there is no conflict of interest.

\section*{Acknowledgement} The author acknowledges support by the DFG cluster of excellence 2111 (Munich Center for Quantum Science and Technology). The author thanks Felix Leditzky, Li Gao, Mark M. Wilde, and Haonan Zhang for reading my note and giving useful comments. The author also thanks Mark M. Wilde for posting the question on Twitter.
%\bibliography{opt}

\begin{thebibliography}{10}

\bibitem{ando1978topics}
T.~Ando.
\newblock Topics on operator inequalities.
\newblock {\em Lecture notes}, Hokkaido Univ., Sapporo, 1978.

\bibitem{beigi2013sandwiched}
Salman Beigi.
\newblock Sandwiched r{\'e}nyi divergence satisfies data processing inequality.
\newblock {\em Journal of Mathematical Physics}, 54(12), 2013.

\bibitem{Bha13}
Rajendra Bhatia.
\newblock {\em Matrix analysis}, volume 169.
\newblock Springer Science \& Business Media, 2013.

\bibitem{Car10}
Eric Carlen.
\newblock Trace inequalities and quantum entropy: an introductory course.
\newblock {\em Entropy and the quantum}, 529:73--140, 2010.

\bibitem{frenkel2023integral}
P{\'e}ter~E Frenkel.
\newblock Integral formula for quantum relative entropy implies data processing
  inequality.
\newblock {\em Quantum}, 7:1102, 2023.

\bibitem{GW20}
Li~Gao and Mark~M Wilde.
\newblock Recoverability for optimized quantum f-divergences.
\newblock {\em Journal of Physics A: Mathematical and Theoretical},
  54(38):385302, 2021.

\bibitem{hansen2003jensen}
Frank Hansen and Gert~K Pedersen.
\newblock Jensen's operator inequality.
\newblock {\em Bulletin of the London Mathematical Society}, 35(4):553--564,
  2003.

\bibitem{hiai2011quantum}
Fumio Hiai, Mil{\'a}n Mosonyi, D{\'e}nes Petz, and C{\'e}dric B{\'e}ny.
\newblock Quantum f-divergences and error correction.
\newblock {\em Reviews in Mathematical Physics}, 23(07):691--747, 2011.

\bibitem{hirche2023quantuma}
Christoph Hirche, Cambyse Rouz{\'e}, and Daniel~Stilck Fran{\c{c}}a.
\newblock Quantum differential privacy: An information theory perspective.
\newblock {\em IEEE Transactions on Information Theory}, 2023.

\bibitem{hirche2023quantum}
Christoph Hirche and Marco Tomamichel.
\newblock Quantum {R\'enyi} and $f$-divergences from integral representations.
\newblock {\em arXiv preprint arXiv:2306.12343}, 2023.

\bibitem{lesniewski1999monotone}
Andrew Lesniewski and Mary~Beth Ruskai.
\newblock Monotone riemannian metrics and relative entropy on noncommutative
  probability spaces.
\newblock {\em Journal of Mathematical Physics}, 40(11):5702--5724, 1999.

\bibitem{muller2017monotonicity}
Alexander M{\"u}ller-Hermes and David Reeb.
\newblock Monotonicity of the quantum relative entropy under positive maps.
\newblock In {\em Annales Henri Poincar{\'e}}, volume~18, pages 1777--1788.
  Springer, 2017.

\bibitem{muller2013quantum}
Martin M{\"u}ller-Lennert, Fr{\'e}d{\'e}ric Dupuis, Oleg Szehr, Serge Fehr, and
  Marco Tomamichel.
\newblock On quantum r{\'e}nyi entropies: A new generalization and some
  properties.
\newblock {\em Journal of Mathematical Physics}, 54(12), 2013.

\bibitem{Petz86}
D{\'e}nes Petz.
\newblock Quasi-entropies for finite quantum systems.
\newblock {\em Reports on mathematical physics}, 23(1):57--65, 1986.

\bibitem{sharma2012strong}
Naresh Sharma and Naqueeb~Ahmad Warsi.
\newblock On the strong converses for the quantum channel capacity theorems.
\newblock {\em arXiv preprint arXiv:1205.1712}, 2012.

\bibitem{tomamichel2009fully}
Marco Tomamichel, Roger Colbeck, and Renato Renner.
\newblock A fully quantum asymptotic equipartition property.
\newblock {\em IEEE Transactions on information theory}, 55(12):5840--5847,
  2009.

\bibitem{um62}
Hisaharu Umegaki.
\newblock Conditional expectation in an operator algebra, iv (entropy and
  information).
\newblock In {\em Kodai Mathematical Seminar Reports}, volume~14, pages 59--85.
  Department of Mathematics, Tokyo Institute of Technology, 1962.

\bibitem{vedral2002role}
Vlatko Vedral.
\newblock The role of relative entropy in quantum information theory.
\newblock {\em Reviews of Modern Physics}, 74(1):197, 2002.

\bibitem{Wil18}
Mark~M Wilde.
\newblock Optimized quantum f-divergences and data processing.
\newblock {\em Journal of Physics A: Mathematical and Theoretical},
  51(37):374002, 2018.

\bibitem{wilde2014strong}
Mark~M Wilde, Andreas Winter, and Dong Yang.
\newblock Strong converse for the classical capacity of entanglement-breaking
  and hadamard channels via a sandwiched r{\'e}nyi relative entropy.
\newblock {\em Communications in Mathematical Physics}, 331:593--622, 2014.

\end{thebibliography}
\bibliographystyle{plain}

\end{document}